\documentclass[11pt]{article}

\usepackage[utf8]{inputenc}
\usepackage[T1]{fontenc}
\usepackage{geometry}
\usepackage{amsmath, amssymb, amsthm, mathtools}
\usepackage{bm}
\usepackage{booktabs}
\usepackage{url}
\usepackage{algorithm2e}
\usepackage{tikz}
\usetikzlibrary{arrows.meta,positioning,calc}

\newtheorem{theorem}{Theorem}
\newtheorem{lemma}[theorem]{Lemma}

\newtheorem{corollary}[theorem]{Corollary}

\newtheorem{definition}{Definition}

\newcommand{\st}{\mathrm{ST}}
\newcommand{\la}{\mathrm{LA}}

\title{PSPACE-Completeness of Multi-Agent Path Finding for Large Agents}
\author{Maichi Zhang\thanks{
Graduate School of Informatics, Kyoto University.
\protect\url{zhangmaichi@amp.i.kyoto-u.ac.jp}} \and 
Naoyuki Kamiyama\thanks{
Graduate School of Informatics, Kyoto University.
\protect\url{kamiyama@amp.i.kyoto-u.ac.jp}.
This work was supported by JSPS KAKENHI Grant Number JP24K14825.} \and
Kanae Yoshiwatari\thanks{
Graduate School of Informatics, Kyoto University.
\protect\url{yoshiwatari.kanae.7p@kyoto-u.ac.jp}}}
\date{}

\begin{document}
\maketitle

\begin{abstract}
Multi-Agent Path Finding for Large Agents (LA-MAPF) is a geometric variant of MAPF in which agents are modeled as disks and conflicts are determined by physical overlap in the underlying Euclidean workspace. The goal of LA-MAPF is to decide whether there exists a sequence of conflict-free transitions from a start configuration to a goal configuration. Agafonov and Yakovlev proved that LA-MAPF is NP-hard. In this paper, we strengthen their result by proving that LA-MAPF is PSPACE-complete via a polynomial-time reduction from {\sc Restricted Sliding Tokens}.
\end{abstract}

\section{Introduction}

Multi-Agent Path Finding (MAPF) asks whether a collection of agents can be routed on a graph from prescribed start vertices to prescribed goal vertices without conflicts. In the classical formulation, agents are treated as dimensionless points, and conflicts are combinatorial. While this abstraction has fostered a rich algorithmic literature, it bypasses a fundamental geometric difficulty inherent in robotic applications: real-world agents have a physical volume.

LA-MAPF (MAPF for Large Agents) \cite{DBLP:conf/aaai/LiSF0KK19} incorporates this physical reality by embedding the graph in the Euclidean plane and modeling every agent as a disk of radius $r > 0$. When an agent traverses an edge, its disk sweeps a continuous tube around the straight-line segment representing the edge. Consequently, an agent may collide not only with another agent arriving at the same vertex, but also with a stationary agent located at a different vertex that lies too close to the moving agent's swept trajectory. This additional continuous-space conflict is a primary source of hardness in LA-MAPF.

While the continuous-space conflicts of LA-MAPF introduce significant geometric challenges, the computational complexity of the classical MAPF, where agents are treated as body-less points, has been extensively studied and depends heavily on the graph's topology and the specific task. The classical MAPF can be solved in polynomial time on undirected graphs \cite{kornhauser1984coordinating}. However, the classical MAPF becomes NP-complete on general directed graphs \cite{nebel2020computational}, though it can be solved in polynomial time for strongly connected directed graphs \cite{nebel2024computational}. The optimization variant of the classical MAPF, which seeks to minimize the makespan, is fundamentally intractable \cite{surynek2010optimization}. Finding optimal solutions like the sum of costs or makespan is NP-hard for undirected planar graphs \cite{yu2015intractability} and directed acyclic graphs \cite{tan2023intractability}, and standard 2D grid graphs \cite{banfi2017intractability}.

Recently, Agafonov and Yakovlev~\cite{DBLP:conf/ecai/AgafonovY25} investigated the computational complexity of LA-MAPF under the standard transition rule where exactly one agent moves along an edge at a time while all others remain stationary and proved that LA-MAPF is NP-hard by a reduction from 3-SAT. Their work established the fundamental intractability of the problem, but left its tight complexity class open. This is because it is not difficult to prove that this problem is in PSPACE (see Section 3).

\paragraph{Our Contribution.} In this paper, we settle the tight computational complexity of LA-MAPF.

\begin{theorem}[Main Theorem]
{\rm LA-MAPF} is {\rm PSPACE-complete}. 
\end{theorem}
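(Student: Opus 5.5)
The proof has two parts: membership in PSPACE, and PSPACE-hardness by a polynomial-time reduction from {\sc Restricted Sliding Tokens}.

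\textbf{Membership.} A configuration of an LA-MAPF instance is an injective assignment of the $k$ agents to vertices of the embedded graph $G=(V,E)$. It can be written down in $O(k\log|V|)$ bits. Given two configurations, we can check in polynomial time whether one is reachable from the other by a single legal transition. Such a transition moves one agent along an edge $uv$. It is legal if the target vertex is free and the swept tube of radius $r$ around segment $uv$ meets no disk of radius $r$ centred at another occupied vertex. Equivalently, every other occupied vertex $w$ has distance at least $2r$ from the segment. This is a point-to-segment distance comparison. I would assume rational coordinates, or compare squared distances, so that it is exact arithmetic on the input numbers. Hence reachability lies in NPSPACE: nondeterministically guess the next configuration, keep only the current one, and stop at the goal. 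Savitch's theorem then gives membership in PSPACE.

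\textbf{Hardness.} In {\sc Restricted Sliding Tokens}, we are given a graph $H$ (presumably of bounded degree and planar, with whatever restrictions the paper fixes) and two independent sets $I_s,I_t$. The question is whether one can be transformed into the other by sliding one token along an edge while keeping the token set independent. The natural encoding is as follows.
\begin{itemize}
\item Embed $H$ in the plane with polynomial-size coordinates, so that each vertex of $H$ becomes a vertex of $G$ and each edge of $H$ becomes a straight segment, possibly subdivided into a chain.
\item Tokens become agents.
\item Choose $r$ so that two agents on $H$-adjacent vertices would overlap or be blocked, while agents on non-adjacent vertices are far apart.
\end{itemize}
Independence is then the geometric non-overlap condition. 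A single-edge move of an agent corresponds to a token slide, and the swept-tube condition should encode ``the new position is not adjacent to any other token.'' Both directions of the correspondence must be checked. First, every legal slide lifts to a sequence of legal agent moves. Second, every legal agent move sequence projects to a legal slide sequence. For the second direction, agents sitting at intermediate subdivision points must be shown to behave like a token sitting at one endpoint, or to be forced back.

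\textbf{Main obstacle.} I expect the main difficulty to be the geometric realisation. Conflicts must arise \emph{exactly} for $H$-adjacent pairs and not for any other pair. Unit-disk-like proximity cannot represent arbitrary adjacency, and a swept tube can brush past unrelated vertices. This is presumably why a restricted version of Sliding Tokens is used. My first attempt would be to take $H$ of maximum degree $3$ and planar. I would then use gadget-based drawings: long subdivided edge paths, with a small ``contact'' gadget near each endpoint. The separation between distinct gadgets would be at least $4r$, and the contact distance inside a gadget would be below $2r$. I would then verify by case analysis on vertex and gadget types that the only swept-tube conflicts are the intended ones. Finally, I would argue that agents on subdivision chains cannot pass one another or escape. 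This gives a bijection-up-to-equivalence between reachable configurations.
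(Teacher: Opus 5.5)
Your membership argument is correct and matches the paper's: keep one configuration in $O(k\log|V|)$ bits, verify each transition with squared rational distances, and apply Savitch; a counter up to $|V|^k$ makes every branch halt.

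The hardness part has a genuine gap: it contains no mechanism that makes two \emph{far-apart} port vertices mutually exclusive. In LA-MAPF two stationary agents interact only if their centres are closer than $2r$; the swept-tube condition constrains only the agent currently moving.
\begin{itemize}
\item With tokens as the only agents, asking $H$-adjacent vertices to ``overlap'' and non-adjacent ones to be far apart amounts to drawing $G_{\rm ST}$ as a unit-disk-type proximity graph. As you note yourself, this fails in general. Bounded-degree proximity graphs have polynomial growth, whereas tree-like arrangements of token triangles have exponential growth.
\item Once you subdivide edges into long chains, two tokens at the ends of a link edge are simply far apart and may coexist. So the direction ``LA-MAPF plan $\Rightarrow$ slide sequence'' fails.
\item If link edges are traversable chains, a token-agent can leave its component and park on a chain, a state with no Sliding Tokens counterpart.
\item The alternative ``or be blocked'' by a swept tube is never specified. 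Your closing remarks about contact gadgets and chain agents that ``cannot pass one another'' never say which agents occupy the chains, how many there are, or what invariant ties one end of a chain to the other.
\end{itemize}
That invariant is the missing idea.

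The paper supplies it by exploiting the structure of {\sc Restricted Sliding Tokens}. Every reachable configuration is standard, so no token ever slides along a link edge, and a link edge acts purely as a binary exclusion constraint. Link edges are therefore not movement paths for tokens at all. Instead:
\begin{itemize}
\item Each link edge becomes a separate orthogonal wire. Its two end vertices (blocking vertices) lie within distance $<2r$ of the respective port vertices but are not joined to them by graph edges.
\item The wire is filled with auxiliary $b$-agents, one fewer than its number of vertices, so it always contains exactly one hole.
\item A $v$-agent on one port forces the nearby blocking vertex to be the hole. Hence the far blocking vertex is occupied, and the far port cannot be entered. This is the paper's Constraint Transmission Invariant.
\item Token edges are likewise one-hole wires of $v$-agents, and token triangles are constant-size rectangle gadgets at contracted macro-vertices.
\item An orthogonal drawing of the contracted graph, scaled by $100r$, keeps unrelated parts far apart.
\end{itemize}
With this device both directions of the correspondence reduce to local geometric checks. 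Without it, or some other concrete long-range exclusion mechanism, your reduction cannot be verified.
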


Our proof demonstrates that the problem remains PSPACE-complete even under strict geometric constraints. This strict geometric setting motivates the specific orthogonal gadget designs utilized in our reduction.

\begin{corollary}
{\rm LA-MAPF} is {\rm PSPACE-complete} even when the underlying graph is a plane graph, and all agents' movements are restricted to horizontal and vertical directions.
\end{corollary}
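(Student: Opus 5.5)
The plan is to read the corollary directly off the reduction used for the Main Theorem, checking that every instance it produces has the two stated properties. Membership in PSPACE is not an issue: the restricted class is a subclass of LA-MAPF, and the PSPACE algorithm of Section~3 (a nondeterministic walk through configurations, using polynomial space, combined with Savitch's theorem) applies unchanged. So the work is entirely in PSPACE-hardness.

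For hardness, I will reduce from \textsc{Restricted Sliding Tokens}, which is PSPACE-complete on planar graphs of bounded degree in the restricted form used by the main proof. The steps are as follows.
\begin{enumerate}
\item Take a planar, maximum-degree-three instance of \textsc{Restricted Sliding Tokens}.
\item Compute a planar orthogonal grid drawing of its graph in polynomial time. Standard algorithms for drawings of degree at most four give integer coordinates of polynomial size, with every edge a rectilinear polyline.
\item Replace each vertex by a token gadget and each edge polyline by a corridor gadget. Every gadget is built from straight horizontal or vertical unit segments on a scaled grid. Agents are disks of radius $r$, and $r$ and the scale are chosen so that two disks conflict exactly when the intended independence or adjacency constraints are violated.
\item Verify the two properties. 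Every edge of the LA-MAPF graph is axis-parallel, so each agent move is horizontal or vertical. Segments meet only at grid points that are also vertices of the graph, so the embedding is plane.
\item Confirm that start and goal configurations correspond exactly to the source and target token sets, and that the construction and its coordinates have polynomial size.
\end{enumerate}

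The main obstacle is step 3 at the corners and junctions of the grid drawing. There, a moving disk's swept tube along one segment must not accidentally hit agents sitting in a perpendicular corridor or a different gadget. At the same time it must still hit the agents it is supposed to block. The approach I will try first is to enlarge the grid spacing so that distinct gadgets are separated by more than $4r$. Then all conflicts are local to a gadget, and correctness reduces to a finite case check on the constant-size corner, straight and junction gadgets. Each of these uses only axis-parallel edges, which can be checked explicitly.

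Finally, I will show that the sliding-token reachability relation is preserved in both directions. Any token slide can be simulated by a sequence of agent moves inside the gadgets. Conversely, any feasible LA-MAPF schedule, projected onto the occupied token gadgets, gives a valid sequence of token slides. Together these give PSPACE-hardness, and hence PSPACE-completeness, for the restricted class.
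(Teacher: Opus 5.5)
Your proposal matches the paper's argument: the paper also reads the corollary off the main reduction, with membership inherited from the Section~3 algorithm. Hardness holds because every edge of $I_{\mathrm{LA}}$ is a single axis-parallel segment (wire bends are themselves vertices), and the constant-size gadgets sit locally around a planar orthogonal grid drawing scaled by $\Lambda = 100r$, so the embedded graph is plane. The one place where axis-parallel moves need a specific idea is the token triangle, since three mutually adjacent vertices cannot be joined pairwise by axis-parallel segments without collinearity. The paper handles this with the rectangle $p_{X,1}p_{X,2}p_{X,t}p_{X,3}$, representing port $2$ by both $p_{X,2}$ and $p_{X,t}$, rather than with a generic per-vertex gadget, so if you rebuild the gadgets yourself you need an analogous trick there.
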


The proof consists of two parts. First, we establish membership in PSPACE by demonstrating that reachability within its exponentially large state space can be decided using only polynomial memory via Savitch's Theorem~\cite{Savitch70}. Second, we prove the PSPACE-hardness of LA-MAPF by a polynomial-time reduction from a restricted version of {\sc Sliding Tokens}~\cite{BonsmaCereceda09, HearnDemaine05}. The outline of our proof is as follows: the independent set constraints of the graph can be flawlessly simulated by sparse one-hole wires rather than by fully packed corridors. Specifically, in our reduction, we contract only the token triangles into macro-vertices and utilize the orthogonal graph drawing algorithm \cite{BiedlK98} to lay out the macro-graph without edge crossovers. Token triangles become local constant-size gadgets, while token edges and link edges are transformed into orthogonal grid wires. By placing at most three agents on an orthogonal grid wire which represents a token edge or link edge and consists of two endpoints and at most two corner vertices, we physically simulate the independent set constraint of {\sc Sliding Tokens} using a single ``hole''.

\section{Preliminaries}

All graphs in this paper are finite and undirected. If a graph $G=(V,E)$ is embedded in the Euclidean plane, then we write $p(v)\in\mathbb{Q}^2$ for the rational coordinates of the point representing $v\in V$, where $\mathbb{Q}$ is the set of rational numbers. Distances are Euclidean, and let $\|x\|$ denote the distance between the origin and $x$ for each point $x \in \mathbb{Q}^2$. For two points $x,y\in\mathbb{Q}^2$, $\overline{xy}$ denotes the closed line segment joining them, and $\|x - y\|$ represents the distance between them. To strictly avoid irrational number computations in our proofs, all input coordinates and radii are given as rational numbers, and all distance comparisons are performed using squared rational distances.

\subsection{LA-MAPF}

In this subsection, we formally define LA-MAPF and its associated terminology.

\begin{definition}[LA-MAPF Instance]
An instance of \emph{LA-MAPF} is a tuple $I=(G,p,A,r,C_s,C_g)$, where $G=(V,E)$ is a graph, and $p:V\to\mathbb{Q}^2$ is an embedding of its vertices in the plane. Here, each edge $e=\{u,v\}\in E$ corresponds to the straight closed line segment connecting the points $p(u)$ and $p(v)$. Furthermore, $A=\{a_1, a_2, \ldots,a_k\}$ is a set of agents, $r>0$ is their common disk radius, and $C_s,C_g:A\to V$ are the start and goal configurations. In general, a configuration is an injective mapping $C:A\to V$ assigning each agent to a vertex.
\end{definition}

In LA-MAPF, a conflict between agents can occur either at the vertices or along the edges during continuous motion. A configuration is \emph{vertex-feasible} (i.e., devoid of vertex conflicts) if the physical disks of every two distinct agents $a_i, a_j \in A$ do not overlap:
\[
  \|p(C(a_i)) - p(C(a_j))\|^2 \ge 4r^2.
\]
We assume that the start configuration $C_s$ and the goal configuration $C_g$ are vertex-feasible.

For a vertex-feasible configuration $C$ and another configuration $C'$, we write $C \rightsquigarrow C'$ if $C'$ is vertex-feasible and there exists some agent $a_i \in A$ satisfying the following conditions:
\begin{itemize}
    \item Assume that $u = C(a_i)$ and $v = C'(a_i)$. Then $u \neq v$ and $\{u, v\} \in E$.
    
    \item $C(a_j) = C'(a_j)$ for every agent $a_j \in A \setminus \{a_i\}$.
    
    \item For every agent $a_j \in A \setminus \{a_i\}$, we have
    $$ \mathrm{dist}(p(C(a_j)), \overline{p(u)p(v)})^2 \geq 4r^2, $$
    where, for a point $x \in \mathbb{Q}^2$ and a closed line segment $\overline{yz}$ such that $y, z \in \mathbb{Q}^2$, $\mathrm{dist}(x, \overline{yz})$ is defined as the minimum Euclidean distance between $x$ and a point on the segment, i.e., $\mathrm{dist}(x, \overline{yz}) = \min_{\lambda \in [0,1]} \|x - ((1 - \lambda)y + \lambda z)\|$.
\end{itemize}
 In other words, $C \rightsquigarrow C'$ means that the \textit{transition} from the vertex-feasible configuration $C$ to the configuration $C'$ is \textit{collision-free}, i.e., devoid of vertex and edge conflicts. Under the condition that only one agent $a_i$ moves from $u = C(a_i)$ to $v = C'(a_i)$ along the edge $\{u, v\} \in E$, while all other agents remain stationary at their current vertices. Furthermore, for every stationary agent $a_j$ at the vertex $C(a_j) = C'(a_j)$, its distance to the moving agent's swept segment is geometrically safe. A \textit{conflict-free plan} from a vertex-feasible configuration $C$ to another configuration $C'$ is a finite sequence of configurations $C_1, C_2, \dots, C_\ell$ such that $C_1 = C$, $C_\ell = C'$, $C_i \rightsquigarrow C_{i+1}$ for every integer $i \in \{1, 2, \dots, \ell - 1\}$.

We are now ready to define our problem.
\begin{definition}[LA-MAPF]
Given an instance $I=(G,p,A,r,C_s,C_g)$ of \emph{LA-MAPF}, the goal is to determine whether there exists a conflict-free plan from $C_s$ to $C_g$.
\end{definition}

\subsection{{\sc Sliding Tokens}}

In this subsection, we define {\sc Restricted {\sc Sliding Tokens}}, which serves as the source problem for our PSPACE-completeness reduction.

In {\sc Sliding Tokens}, we are given a graph $G_{\rm ST}$. A \emph{token configuration} of $G_{\rm ST}$ is defined as an independent set in $G_{\rm ST}$ (i.e., a subset of vertices of $G_{\rm ST}$ such that there does not exist an edge of $G_{\rm ST}$ connecting any pair of vertices in this subset). A token configuration represents the places where tokens reside. Given a start token configuration $T_s$ and a goal token configuration $T_g$, the problem asks whether $T_s$ can be transformed into $T_g$ by a sequence of valid token moves. A \emph{valid token move} slides a single token along an edge of $G_{\rm ST}$ to an adjacent vertex, provided that the resulting set of occupied vertices remains an independent set, i.e., a token configuration. An instance of {\sc Sliding Tokens} is denoted as $(G_{\rm ST}, T_s, T_g)$.

We reduce from the following restricted version of {\sc Sliding Tokens}, which is called \emph{{\sc Restricted {\sc Sliding Tokens}}} (Figure \ref{fig:rst-example}). It is known that {\sc Restricted {\sc Sliding Tokens}} is PSPACE-complete \cite{BonsmaCereceda09,HearnDemaine05}.
 
\begin{definition}[{\sc Restricted {\sc Sliding Tokens}} Instance]
An instance of \emph{{\sc Restricted {\sc Sliding Tokens}}} is an instance $(G_{\rm ST},T_s,T_g)$ of {\sc Sliding Tokens} satisfying the following properties:

\begin{figure}[htbp]
\centering
\begin{tikzpicture}[
    scale=1.5,
    every node/.style={font=\small},
    tokenbg/.style={circle, draw=black, fill=blue!10, thick, inner sep=2pt, minimum size=16pt},
    empty/.style={circle, draw=black, fill=white, thick, inner sep=2pt, minimum size=16pt},
    token/.style={circle, fill=blue, inner sep=1.5pt}
]

\coordinate (X1) at (0, 2);
\coordinate (X2) at (2, 2);
\coordinate (X3) at (1, 0.5);

\draw[thick] (X1) -- (X2) -- (X3) -- cycle;

\node[tokenbg] (nX1) at (X1) {}; 
\node[token] at (X1) {};
\node[empty] (nX2) at (X2) {};
\node[empty,] (nX3) at (X3) {};

\node[above=0.2cm of nX1, text=blue, font=\footnotesize\bfseries] {Token};
\node[above=-0.7cm of X3, xshift=-1.5cm, font=\footnotesize\bfseries] {Token Triangle};

\coordinate (Y1) at (-1, -1.5);
\coordinate (Y2) at (0.5, -1.5);

\draw[thick, double, distance=2pt] (Y1) -- (Y2);

\node[empty] (nY1) at (Y1) {};
\node[tokenbg] (nY2) at (Y2) {}; 
\node[token] at (Y2) {};

\node[below=0.05cm of nY2, text=blue, font=\footnotesize\bfseries] {Token};
\node[above=0.15cm of Y1, xshift=1.0cm, font=\footnotesize\bfseries] {Token Edge};

\coordinate (Z1) at (1.5, -1.5);
\coordinate (Z2) at (3, -1.5);

\draw[thick, double, distance=2pt] (Z1) -- (Z2);

\node[empty] (nZ1) at (Z1) {};
\node[tokenbg] (nZ2) at (Z2) {}; 
\node[token] at (Z2) {};

\node[below=0.05cm of nZ2, text=blue, font=\footnotesize\bfseries] {Token};
\node[above=0.15cm of Z1, xshift=0.9cm, font=\footnotesize\bfseries] {Token Edge};

\draw[dashed, thick, gray] (nX1) edge[bend right=20] node[left, font=\scriptsize, text=black] {Link Edge} (nY1);
\draw[dashed, thick, gray] (nX2) edge[bend left=20] node[right, font=\scriptsize, text=black] {Link Edge} (nZ2);
\draw[dashed, thick, gray] (nX3) -- node[left, font=\scriptsize, text=black] {Link Edge} (nZ1);
\draw[dashed, thick, gray] (nY2) -- node[pos=0.3, yshift=0.2cm, font=\scriptsize, text=black, anchor=south] {Link Edge} (nZ1);

\end{tikzpicture}
\caption{An example of an instance $(G_{\rm ST},T_s,T_g)$ of {\sc Restricted {\sc Sliding Tokens}}. The graph is partitioned into token triangles (solid lines) and token edges (double lines), interconnected by link edges (dashed lines). In a standard token configuration, exactly one token (blue dot) resides in each token triangle and each token edge.}
\label{fig:rst-example}
\end{figure}

\begin{enumerate}
  \item The graph $G_{\rm ST}$ has maximum degree at most three, and its vertex set is partitioned into pairwise disjoint \emph{token triangles}, which are isomorphic to $K_3$ and \emph{token edges}, which are isomorphic to $K_2$. The degree of every vertex in a token triangle is exactly three.
  \item The edges of $G_{\rm ST}$ not contained in any token triangle or token edge are called \emph{link edges}. A link edge joins a vertex of one token triangle or token edge to a vertex of another token triangle or token edge. 
  \item The graph $G_{\rm ST}$ has a planar embedding where every token triangle bounds a face.

  \item The start and goal token configurations $T_s,T_g$ are \emph{standard}: each token triangle and each token edge contains exactly one token.
\end{enumerate}
\end{definition}

A fundamental property of these restricted instances is that every reachable token configuration remains standard. A token can never traverse a link edge; the first token to enter another token triangle or token edge would immediately violate the independent set constraint by becoming adjacent to the token already residing there. Therefore, the sole topological function of a link edge is to impose a binary exclusion constraint: its two endpoint vertices cannot be occupied simultaneously.

\begin{definition}[{\sc Restricted {\sc Sliding Tokens}}]
Given an instance $(G_{\rm ST},T_s,T_g)$ of~\textup{\textsc{Restricted Sliding Tokens}}, the goal is to determine whether there exists a sequence of valid token moves that transforms the start token configuration $T_s$ into the goal token configuration $T_g$.
\end{definition}

\section{Membership in PSPACE}

\begin{lemma}
$\mathrm{LA\text{-}MAPF}$ belongs to $\mathrm{PSPACE}$.
\end{lemma}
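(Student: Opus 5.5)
We will view LA-MAPF as a reachability question in an implicitly given directed graph and then apply Savitch's Theorem~\cite{Savitch70}. Let $n=|V|$ and $k=|A|$. The state graph $\mathcal{H}$ has one node per vertex-feasible configuration $C:A\to V$, and an arc $C\to C'$ whenever $C\rightsquigarrow C'$. There are at most $n^k\le n^n$ configurations. Each configuration can be written down as a list of $k$ vertex indices, i.e., in $O(k\log n)$ bits, which is polynomial in the input size. The instance is a yes-instance iff $C_g$ is reachable from $C_s$ in $\mathcal{H}$.

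The first step is to check that membership of a node and of an arc in $\mathcal{H}$ can be decided in polynomial time, and hence in polynomial space, from the input.

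\textbf{Vertex-feasibility.} For each pair $a_i,a_j$, compare the squared distance $\|p(C(a_i))-p(C(a_j))\|^2$ with $4r^2$. This is a rational computation whose bit-length is polynomial in the input.

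\textbf{Transition $C\rightsquigarrow C'$.} Check three things:
\begin{enumerate}
\item exactly one agent $a_i$ changes position, and the two positions $u,v$ satisfy $\{u,v\}\in E$;
\item $C'$ is vertex-feasible;
\item for each other agent $a_j$, $\mathrm{dist}(p(C(a_j)),\overline{p(u)p(v)})^2\ge 4r^2$.
\end{enumerate}

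The last check is the only spot needing care, because a distance to a segment might seem to involve square roots. It does not. Write $x=p(C(a_j))$, $y=p(u)$, $z=p(v)$, and compute the minimizing parameter $\lambda^*=\langle x-y,z-y\rangle/\|z-y\|^2$, which is rational. Clamp $\lambda^*$ to $[0,1]$ and evaluate $\|x-((1-\lambda^*)y+\lambda^* z)\|^2$. This is again a rational number of polynomial bit-length, obtained by constantly many arithmetic operations on input rationals. So the whole test runs in polynomial time and uses only squared rational distances, as the paper stipulates.

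The second step is to decide reachability within polynomial space. Savitch's Theorem says that reachability in a graph with $N$ nodes, where nodes have $s$-bit encodings and the arc relation is decidable in space $s$, can be decided deterministically in space $O(s\log N)$. We use the recursive midpoint procedure $\mathrm{Reach}(C,C',t)$, which asks for a path of length at most $2^t$. With $N\le n^k$ we have $\log N=O(k\log n)$, so the space is $O((k\log n)^2)$, which is polynomial.

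Alternatively, the problem is in NPSPACE: guess the plan one configuration at a time, keep a step counter bounded by $n^k$ (which needs $O(k\log n)$ bits), and verify each transition as above. Then NPSPACE $=$ PSPACE by Savitch's Theorem.

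The main obstacle is modest: making sure the geometric edge-conflict test stays within exact rational arithmetic of polynomial size, which is handled by the clamped-projection computation. The rest is the standard Savitch argument applied to the exponentially large but succinctly encoded configuration graph.
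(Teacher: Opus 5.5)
Your proposal is correct and follows essentially the paper's route. The paper uses exactly your alternative argument: it guesses the plan one configuration at a time, keeps a step counter bounded by $|V|^k$, verifies each transition, and then applies NPSPACE $=$ PSPACE; your deterministic midpoint recursion is just Savitch's construction written out, and your clamped-projection computation of the squared segment distance spells out the polynomial-size rational check that the paper only asserts.
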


\begin{proof}
By Savitch's Theorem ($\text{NPSPACE} = \text{PSPACE}$) \cite{Savitch70}, it suffices to show that LA-MAPF can be solved by a non-deterministic algorithm using polynomial space.

Let $n$ be the binary encoding length of the input instance $I = (G, p, A, r, C_s, C_g)$. Note that the number of vertices $|V|$ and the number of agents $k$ are both bounded by $n$. At any time step, a configuration $C$ assigns $k$ agents to $k$ vertices. Representing this assignment requires $k \lceil \log_2 |V| \rceil = O(n \log n)$ bits. The total number of distinct configurations is bounded by $|V|^k$. 

We construct a non-deterministic algorithm that maintains only the current configuration $C_{\text{curr}}$, guesses a valid next configuration $C_{\text{next}}$, and maintains a step counter $S$ to prevent infinite loops. If $S > |V|^k$, the system has exceeded the maximum number of possible configurations and must be trapped in a loop, so the algorithm rejects. The binary counter $S$ requires at most $\log_2(|V|^k) + 1= k \log_2 |V| + 1= O(n \log n)$ bits of storage.
In each step, the algorithm non-deterministically guesses $C_{\text{next}}$ and deterministically verifies that the transition is collision-free.

At any given point, the algorithm stores at most two configurations, the step counter, and the polynomial-size memory required for the geometric verification. Since the maximum memory space is bounded by a polynomial in $n$, LA-MAPF is in NPSPACE. By Savitch's Theorem, LA-MAPF belongs to PSPACE.
\end{proof}

\section{PSPACE-Hardness}
In this section, we prove that LA-MAPF is PSPACE-hard by a reduction from {\sc Restricted Sliding Tokens}. Let $(G_{\rm ST}, T_s, T_g)$ be a given instance of {\sc Restricted Sliding Tokens}. We then construct, in polynomial time, an instance $I_{\la}$ of LA-MAPF from the instance $(G_{\rm ST},T_s,T_g)$ of {\sc Restricted {\sc Sliding Tokens}} such that
\[
  (G_{\st},T_s,T_g) \text{ is a yes-instance if and only if } I_{\la} \text{ is a yes-instance}.
\]

\subsection{Macro graph and orthogonal grid embedding}

To map the instance $(G_{\st},T_s,T_g)$ of {\sc Restricted {\sc Sliding Tokens}} on the physical workspace of LA-MAPF, we first construct a macro-graph $H=(V_H, E_H)$ from $G_{\st}$. The crucial step in our reduction is that we contract only the token triangles into macro-vertices, while preserving the token edges as topological edges.

Before detailing the ingredients of $H$, we define a \emph{port} and a \emph{component} in $G_{\st}$. Let $V(G_{\st})$ denote the vertex set of $G_{\st}$. Because the vertices of $G_{\st}$ are partitioned into token triangles and token edges, we collectively refer to each individual token triangle and token edge as a \emph{component} of $G_{\st}$. Consequently, every vertex $v \in V(G_{\st})$ belongs to exactly one such component. We formally refer to each original vertex $v \in V(G_{\st})$ as a \emph{port} of its respective component.

Formally, the ingredients of $H$ are defined as follows:
\begin{itemize}
    \item \textbf{Vertices ($V_H$)}: For each token triangle in $G_{\st}$, $V_H$ contains exactly one representing vertex, called a \emph{macro-vertex}. For every token edge in $G_{\st}$, its two ports are directly included in $V_H$ as distinct vertices, which we define as \emph{port vertices} of this  token edge.
    \item \textbf{The edge set ($E_H$)}: The edge set $E_H$ consists of all the original token edges (connecting their two port vertices) and all the link edges (connecting either macro-vertices or port vertices, exactly as in $G_{\st}$).
\end{itemize}

This contraction preserves planarity and bounds the maximum degree. A macro-vertex representing a token triangle connects to exactly three link edges, fixing its degree at exactly three. Meanwhile, a port vertex of a token edge connects to exactly one token edge and either one or two link edges, ensuring its degree is between two or three. Thus, $H$ is a planar graph of maximum degree at most three.

Using the algorithm for orthogonal graph drawing \cite{BiedlK98}, $H$ admits a planar orthogonal embedding on a $|V_H| \times |V_H|$ grid. Furthermore, we can compute such an embedding in linear time. In this embedding, every vertex in $V_H$ is placed at an integer grid point, and every edge in $E_H$ is routed as a sequence of orthogonal (horizontal and vertical) straight-line segments. We refer to these orthogonally routed edges as \emph{wires}. The algorithm guarantees that each wire has at most two bends. We uniformly scale this grid by a large factor: $\Lambda = 100r$.
This scaling guarantees that all non-incident gadgets and orthogonal wire segments are separated by distances vastly exceeding $2r$, preventing unintended edge conflicts. 

Once the macro-graph is orthogonally embedded and scaled, we geometrically truncate (or ``crop'') the extreme ends of every orthogonally routed link edge wire. This structural cropping is performed after the embedding to create the necessary physical space to interface these link edge wires with our constant-size local gadgets (token triangle gadgets and token edge gadgets) without causing structural intersections (this cropped interface is visually demonstrated for individual gadgets in Figures~\ref{fig:triangle-gadget} and \ref{fig:port-gadget}, and for a complete link edge wire in Figure~\ref{fig:link-wire}).

\subsection{Token triangle gadgets}

For each token triangle $X$ of $G_{\rm ST}$, we assume that the corresponding macro-vertex in $H$ is embedded at $(x_X, y_X) \in \mathbb{Q}^2$. We assign exactly one \emph{v-agent} to this gadget.

By our earlier definition, the token triangle $X$ consists of exactly three ports. Let us index these three ports as 1, 2, and 3. To enforce orthogonal continuous motion, we define the placement of four port vertices $p_{X,1}, p_{X,2}, p_{X,3}, p_{X,t}$ symmetrically around the grid axes. They form a rectangle graph as follows (as illustrated in Figure \ref{fig:triangle-gadget}):

\[
  p_{X,1} = (x_X - 0.5r, y_X), \qquad p_{X,2} = (x_X - 0.5r, y_X + r),
\]
\[
  p_{X,3} = (x_X + 0.5r, y_X), \qquad p_{X,t} = (x_X + 0.5r, y_X + r).
\]

\begin{figure}[h]
\centering
\begin{tikzpicture}[scale=1.2, every node/.style={font=\small, align=center}]

  \draw[loosely dashed, gray!40, thick] (2.25, 0) -- (4.0, 0) node[right, text=gray] {$y = y_X$};
  \draw[loosely dashed, gray!40, thick] (0, -1.0) -- (0, 4.5) node[above, text=gray] {$x = x_X$};

  \fill[black] (0,0) circle (1.5pt) node[below=4pt, font=\scriptsize] {Anchor};

  \coordinate (P1) at (-0.5, 0);
  \coordinate (P2) at (-0.5, 1);
  \coordinate (P3) at (0.5, 0);
  \coordinate (Pt) at (0.5, 1);

  \coordinate (B1) at (-2.25, 0);
  \coordinate (B2) at (0, 2.5);
  \coordinate (B3) at (2.25, 0);

  \draw[black, thick, ->] (B1) -- (-3.5, 0) node[above,text=black] {Link Edge $\ell_1$};
  \draw[black, thick, ->] (B2) -- (0, 4.0) node[right,text=black] {Link Edge $\ell_2$};
  \draw[black, thick, ->] (B3) -- (3.5, 0) node[above,text=black] {Link Edge $\ell_3$};

  \draw[dashed, red, ultra thick] (P2)--(B2);
  \draw[dashed, red, ultra thick] (Pt)--(B2);
  \node[red, align=center] at (-1.5, 2) {Overlap\\($d^2 = 2.5r^2 < 4r^2$)};
  
  \draw[dashed, red, ultra thick] (P3)--(B3);

  \fill[blue!15] (P1) circle (1);
  \draw[blue, thick] (P1) circle (1);
  \draw[blue, ->, >=stealth] (P1) -- (-1.207, 0.707) node[midway, above right, inner sep=1pt] {$r$};
  \node[blue, font=\bfseries] at (0.8, -0.8) {$v$-agent};

  \draw[thick] (P1)--(P2)--(Pt)--(P3)--cycle;
  \draw[dashed, red, ultra thick] (P1)--(B1) node[left, below=2pt, text=black] {};

  \fill [blue] (P1) circle (2pt) node[below left, xshift=2pt, yshift=-2pt] {$p_{X,1}$};
  \draw[thick, fill=white] (P2) circle (2pt) node[above left, xshift=2pt] {$p_{X,2}$};
  \draw[thick, fill=white] (P3) circle (2pt) node[below right, xshift=-2pt, yshift=-2pt] {$p_{X,3}$};
  \draw[thick, fill=white] (Pt) circle (2pt) node[above right, xshift=-2pt] {$p_{X,t}$};

  \draw[thick, fill=white] (B1) circle (2.5pt) node[above, yshift=2pt] {$B_{1,\ell_1}^{X}$};
  \fill[red!70!white] (B2) circle (2.5pt) node[right, xshift=2pt] {$B_{2,\ell_2}^{X}$\\\text{}};
  \fill[red!70!white] (B3) circle (2.5pt) node[above, yshift=-4pt] {$B_{3,\ell_3}^{X}$\\\text{}};

\end{tikzpicture}
\caption{A token triangle gadget restructured into an orthogonal rectangle. The gadget is aligned with the intersecting macro-graph grid lines. The addition of $p_{X,t}$ forms a rectangle graph, avoiding diagonal continuous motions. In the depicted configuration, the vertices $B_{2,\ell_2}^{X}$ and $B_{3,\ell_3}^{X}$ are occupied by other agents. Due to geometric conflicts, the $v$-agent is physically restricted to $p_{X,1}$ and cannot move to the vertices $p_{X,2}, p_{X,t}$, or $p_{X,3}$.}
\label{fig:triangle-gadget}
\end{figure}

We map the port 1 to $p_{X,1}$, the port 3 to $p_{X,3}$, and the port 2 to $\{p_{X,2}, p_{X,t}\}$. 
Assume that the token triangle $X$ is incident to three link edges $\ell_1, \ell_2$, and $\ell_3$ in $G_{\rm ST}$. For these incident link edges routed orthogonally by the grid embedding, we introduce three external \emph{blocking vertices}, denoted as $B^X_{1,\ell_1}, B^X_{2,\ell_2}$, and $B^X_{3,\ell_3}$. Here, $B^X_{i,\ell_k}$ represents the blocking vertex corresponding to the port $i$ of $X$ and the incident link edge $\ell_k$. We place these blocking vertices as follows:
\[
  B_{1,\ell_1}^{X} = (x_X - 2.25r, y_X), \qquad B_{2,\ell_2}^{X} = (x_X, y_X + 2.5r), \qquad B_{3,\ell_3}^{X} = (x_X + 2.25r, y_X).
\]
This geometric arrangement can be rotated around the anchor $(x_X, y_X)$ by multiples of $90^\circ$ to accommodate any arbitrary combination of up to three orthogonal routing directions chosen by the macro-graph embedding. 

The squared distance checks yield:
\[
  \|B_{1,\ell_1}^{X} - p_{X,1}\|^2 = (-1.75r)^2 = 3.0625r^2 < 4r^2,
\]
\[
  \|B_{2,\ell_2}^{X} - p_{X,2}\|^2 = \|B_{2,\ell_2}^{X} - p_{X,t}\|^2 = (\pm 0.5r)^2 + (1.5r)^2 = 2.5r^2 < 4r^2,
\]
\[
  \|B_{3,\ell_3}^{X} - p_{X,3}\|^2 = (1.75r)^2 = 3.0625r^2 < 4r^2.
\]

Crucially, non-target interactions are rigorously safe. For instance, the squared distance from $B_{1,\ell_1}^{X}$ to $p_{X,2}$ is $(-1.75r)^2 + (r)^2 = 4.0625r^2 > 4r^2$, guaranteeing sufficient geometric separation. Consequently, when the $v$-agent occupies $p_{X,1}$, its physical disk would overlap with that of any agent at $B_{1,\ell_1}^{X}$, preventing $B_{1,\ell_1}^{X}$ from being occupied. Meanwhile, it imposes no geometric restrictions on $B_{2,\ell_2}^{X}$ and $B_{3,\ell_3}^{X}$, leaving them fully accessible. For the port 2, a geometric conflict exists between $B_{2,\ell_2}^{X}$ and the vertices $p_{X,2}$ and $p_{X,t}$. This mutual exclusion ensures that the $v$-agent cannot move to $p_{X,2}$ or $p_{X,t}$ if another agent is already stationed at $B_{2,\ell_2}^{X}$; conversely, once the $v$-agent occupies either of $p_{X,2}$ or $p_{X,t}$, no other agent can occupy $B_{2,\ell_2}^{X}$.

\subsection{Token edge gadgets}

A token edge $Y$ of $G_{\rm ST}$ connecting the ports $1$ and $2$ is physically routed along its assigned orthogonal wire in the macro-grid $H$. 

Let $p_{Y,1}$ and $p_{Y,2}$ be the port vertices in $V_H$ corresponding to the two ports $1$ and $2$ of the token edge $Y$. Since the orthogonal embedding algorithm~\cite{BiedlK98} guarantees that each routed edge has at most two bends, the orthogonal wire contains at most two grid corners, meaning the number of vertices is bounded by a small constant (at most four vertices: the two port vertices and at most two orthogonal bends).

To simulate a token move on $Y$, we assign a fixed number of $v$-agents to the token edge wire, exactly one fewer than the total number of vertices (hence, at most three $v$-agents). These agents reside exclusively on the vertices. Because there is exactly one more vertex than the number of agents, there remains exactly one unoccupied vertex at any time, which serves as the \emph{hole}. Due to the physical footprint of the agents, they cannot bypass one another within the narrow geometric corridors. Therefore, simulating a macro-level token slide does not involve a single $v$-agent traveling the entire wire. Instead, it is executed as a sequence of a few transitions (at most three steps): an adjacent $v$-agent continuously slides along the straight-line segment into the hole, leaving its previous vertex as the new hole. For instance, consider a wire with four vertices sequentially ordered as $p_{Y,1}$, $\overline{p}_{Y,1}$, $\overline{p}_{Y,2}$, and $p_{Y,2}$ (as depicted in Figure \ref{fig:port-gadget}). Suppose the token moves from the port $1$ to the port $2$ on $Y$. To represent this locally, the $v$-agents sequentially transition one by one to propagate the hole toward $p_{Y,1}$. Specifically, the hole is initially at $p_{Y,2}$, and the agent at $\overline{p}_{Y,2}$ slides to $p_{Y,2}$, followed by the agent at $\overline{p}_{Y,1}$ sliding to $\overline{p}_{Y,2}$, and finally the agent at $p_{Y,1}$ sliding to $\overline{p}_{Y,1}$. These transitions cascades the hole all the way to $p_{Y,1}$, rigorously forcing the contiguous chain of $v$-agents to occupy all other vertices on the wire, definitively stationing a $v$-agent at $p_{Y,2}$. This configuration logically represents that the port $2$ is occupied by the token in $G_{\st}$.

At $p_{Y,1}$, exactly one orthogonal direction is consumed by the token edge wire. The remaining up to three available orthogonal directions are utilized for the incident link edge wires routed by the orthogonal embedding. For the wire corresponding to each link edge $\ell_m$ approaching from a specific orthogonal direction, we place a blocking vertex $B_{1,\ell_m}^{Y}$ exactly on that directional axis at distance $1.5r$ from $p_{Y,1}$:
\[
  \|p_{Y,1} - B_{1,\ell_m}^{Y}\|^2 = (1.5r)^2 = 2.25r^2 < 4r^2.
\]
Notice that there is no edge between $p_{Y,1}$ and $B_{1,\ell_m}^{Y}$. The constraint is geometric: if a $v$-agent is at $p_{Y,1}$, its physical disk overlaps $B_{1,\ell_m}^{Y}$, ensuring it remains geometrically blocked. Conversely, if the hole is at $p_{Y,1}$, no $v$-agent is present there. Let $\ell_n$ be a link edge incident to the port 1, and $B^Y_{1,\ell_n}$ denote its corresponding blocking vertex. Because $p_{Y,1}$ is unoccupied, $B^Y_{1,\ell_n}$ is geometrically safe to be occupied by agents.

\begin{figure}[h]
\centering
\begin{tikzpicture}[scale=0.95, every node/.style={font=\small, align=center}]

  \coordinate (Tu) at (0,0);
  \coordinate (Tv) at (10,3);

  \coordinate (C1) at (4,0);
  \coordinate (C2) at (4,3);

  \coordinate (Bu1) at (0, 0.9);
  \coordinate (Bu2) at (0, -0.9);

  \coordinate (Bv1) at (10, 3.9);
  \coordinate (Bv2) at (10.9, 3);

  \draw[black, thick, ->] (Bu1) -- (0, 1.8) node[above, text=black] {Link Edge $\ell_1$};
  \draw[black, thick, ->] (Bu2) -- (0, -1.8) node[below, text=black] {Link Edge $\ell_2$};
  \draw[black, thick, ->] (Bv1) -- (10, 4.8) node[above, text=black] {Link Edge $\ell_3$};
  \draw[black, thick, ->] (Bv2) -- (11.8, 3) node[right, text=black] {Link Edge $\ell_4$};

  \fill[blue!15] (Tu) circle (0.6); 
  \draw[blue, thick] (Tu) circle (0.6);

  \draw[ultra thick, black] (Tu) -- (C1) -- (C2) -- (Tv);

  \node[above, text=black] at (6, 3) { };

  \draw[thick, black] (3.6,0) -- (3.6,0.4) -- (4,0.4);       
  \draw[thick, black] (4.4,3) -- (4.4,2.6) -- (4,2.6);       

  \draw[dashed, red, ultra thick] (Tu) -- (Bu1) node[midway, right, xshift=10pt, font=\footnotesize] {Overlap};
  \draw[dashed, red, ultra thick] (Tu) -- (Bu2) node[midway, right, xshift=10pt, font=\footnotesize] { };
  
  \fill[blue] (Tu) circle (3pt) node[left, xshift=-15pt, text=black] {$p_{Y,1}$\\($v$-agent)};
  \fill[blue] (C1) circle (3pt) node[below right, text=black] {$\overline{p}_{Y,1}$\\($v$-agent)};
  \fill[blue] (C2) circle (3pt) node[above left, text=black] {$\overline{p}_{Y,2}$\\($v$-agent)};
  
  \draw[dashed, red, thick] (Tv) -- (Bv1) node[midway, right, xshift=2pt, font=\footnotesize] { };
  \draw[dashed, red, thick] (Tv) -- (Bv2);

  \draw[thick, black, fill=white] (Tv) circle (3pt) node[below, yshift=-4pt, text=black] {$p_{Y,2}$\\(Hole)};

  \draw[thick, fill=white] (Bu1) circle (2.5pt) node[above left, xshift=-2pt] {$B_{1,\ell_1}^{Y}$};
  \draw[thick, fill=white] (Bu2) circle (2.5pt) node[below left, xshift=-2pt] {$B_{1,\ell_2}^{Y}$};
  
  \fill[red!70!white] (Bv1) circle (3pt) node[left, xshift=-2pt, text=black] {$B_{2,\ell_3}^{Y}$};
  \fill[red!70!white] (Bv2) circle (3pt) node[yshift=13pt, xshift=13pt, text=black] {$B_{2,\ell_4}^{Y}$};

\end{tikzpicture}
\caption{The wire contains at most four vertices, it requires at most three $v$-agents. Here, the hole is pushed to $p_{Y,1}$, forcing a $v$-agent (blue disk) to reside at $p_{Y,2}$. This geometric presence overlaps and blocks $B_{2,\ell_3}^{Y}$ and $B_{2,\ell_4}^{Y}$. Conversely, the absence of an agent at $p_{Y,1}$ allows $B_{1,\ell_1}^{Y}$ and $B_{1,\ell_2}^{Y}$ to safely host agents.}
\label{fig:port-gadget}
\end{figure}

\subsection{Link edge wires and constraint transmission}
Consider a link edge ${\ell_k}$ of $G_{\st}$ joining the port $i$ of one component $X$ to the port $j$ of another component $Y$. Recall that $B_{i,{\ell_k}}^{X}$ and $B_{j,{\ell_k}}^{Y}$ are the blocking vertices corresponding to the port vertex $p_{X,i}$ and the port vertex $p_{Y,j}$, respectively. In the scaled orthogonal macro-embedding, the route for ${\ell_k}$ is initially an orthogonal wire connecting the two respective grid points. To accommodate the physical footprint of the endpoint gadgets, this continuous wire is geometrically cropped at both ends. 

Utilizing the property that the orthogonal algorithm produces at most two bends per edge, the cropped orthogonal route for ${\ell_k}$ contains at most two intermediate grid corners. We define $W_{{\ell_k}}$ as the wire of vertices  along this remaining route, starting at $B_{i,{\ell_k}}^{X}$, passing through the intermediate corners, and ending at $B_{j,{\ell_k}}^{Y}$. Because there are at most two corners, $W_{\ell_k}$ contains at most four vertices. We assign \emph{blocker-agents} ($b$-agents) to $W_{\ell_k}$, providing exactly one fewer $b$-agent than the total number of vertices on the wire (hence, at most three $b$-agents). Thus, $W_{\ell_k}$ always maintains exactly one hole at all times. The mechanism of constraint transmission along this wire is illustrated in Figure \ref{fig:link-wire}.

The following geometric invariant forms the core of our reduction.

\begin{lemma}[Constraint Transmission Invariant]
Let ${\ell_k}$ be a link edge in $G_{\rm ST}$ connecting the port $i$ of a component X and the port $j$ of a component Y. In every configuration reachable from the start configuration via a conflict-free plan:
\begin{enumerate}
  \item[(1)] The wire $W_{\ell_k}$ contains exactly one hole.
  \item[(2)] If a $v$-agent occupies the port vertex $p_{X,i}$, then $B_{i,{\ell_k}}^{X}$ must be the hole of $W_{\ell_k}$, which forces $B_{j,{\ell_k}}^{Y}$ to be occupied by a $b$-agent.
  \item[(3)] If a $v$-agent occupies the port vertex $p_{X,i}$, then the port vertex $p_{Y,j}$ cannot be simultaneously occupied by another $v$-agent, preserving the independent set constraint enforced by $\ell_k$.
\end{enumerate}
\end{lemma}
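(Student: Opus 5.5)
The plan is to prove all three items simultaneously by induction on the length of a conflict-free plan from $C_s$. At each transition $C \rightsquigarrow C'$ exactly one agent moves along one edge of $G$, so only a constant-size neighbourhood of that edge has to be examined.

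\textbf{Base case.} The start configuration $C_s$ is built from the standard token configuration $T_s$. In each token edge wire and each link edge wire $W_{\ell_k}$, we place the agents so that the hole sits at the vertex dictated by $T_s$. Concretely, for every link edge $\ell_k$ with an endpoint port occupied in $T_s$, the hole of $W_{\ell_k}$ is put at the blocking vertex adjacent to that port. Since $T_s$ is an independent set, at most one endpoint of $\ell_k$ is occupied, so this choice is well defined and (2), (3) hold in $C_s$.

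\textbf{Item (1).} This is structural. Every edge of the constructed graph $G$ lies either inside a gadget or along a wire, and no edge joins $W_{\ell_k}$ to any other vertex set. The blocking vertices are deliberately not adjacent to port vertices. Hence no agent can ever enter or leave $W_{\ell_k}$. The $|W_{\ell_k}|-1$ $b$-agents stay on the $|W_{\ell_k}|$ vertices, and since a configuration is injective, exactly one hole remains. The same argument shows that $v$-agents never leave their gadget.

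\textbf{Item (2).} This is the geometric heart of the argument. Suppose a $v$-agent sits at $p_{X,i}$. The distance checks of the gadget sections give $\|p_{X,i}-B^X_{i,\ell_k}\|^2<4r^2$ (for instance $3.0625r^2$, $2.5r^2$ or $2.25r^2$), so vertex-feasibility forbids any agent at $B^X_{i,\ell_k}$. By (1), this vertex is therefore the unique hole of $W_{\ell_k}$. Every other vertex of $W_{\ell_k}$ is then occupied by a $b$-agent, in particular $B^Y_{j,\ell_k}$ (the wire has at least two vertices, and the two blocking vertices are distinct). For the port~2 of a triangle we must also handle $p_{X,t}$, which carries the same distance $2.5r^2$. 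So we read ``occupies $p_{X,2}$'' as ``occupies $p_{X,2}$ or $p_{X,t}$'', and the argument is identical.

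\textbf{Item (3).} This follows by applying the same geometric fact at the other end. A $v$-agent at $p_{Y,j}$ would overlap the $b$-agent at $B^Y_{j,\ell_k}$, violating vertex-feasibility. Since every reachable configuration is vertex-feasible by the definition of $\rightsquigarrow$, this is impossible.

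\textbf{Main obstacle: spurious interactions.} The hard step is to rule out unintended geometric interactions, both vertex overlaps and swept-segment conflicts, that could make a configuration infeasible or block a move in a way the proof does not account for. Item (3) only needs vertex-feasibility, so it is safe. What genuinely needs care is that the ``only if'' direction of the reduction uses (2) as a forcing statement. I would verify that the only pairs of vertices of $G$ at squared distance below $4r^2$, or a vertex within distance $2r$ of a non-incident edge segment, are:
\begin{itemize}
  \item the intended port/blocking pairs;
  \item consecutive vertices on the same wire or gadget.
\end{itemize}
The gadgets are constant-size, so this reduces to finitely many checks, such as the $4.0625r^2$ computation. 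The grid scaling $\Lambda=100r$ separates all non-incident gadgets and wires, with the cropping keeping the wire ends at least distance $2r$ away from non-target port vertices. I would first try to make this quantitative via a lemma stating that distinct grid features are at distance at least $\Lambda - O(r)$. The remaining local cases would then be checked directly against the coordinates.
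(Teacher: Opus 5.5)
Your proof is correct and follows the paper's own argument: conservation of $b$-agents on the isolated component $W_{\ell_k}$ gives (1); the sub-$2r$ distance between $p_{X,i}$ and $B^X_{i,\ell_k}$, combined with (1), forces the hole there and hence a $b$-agent at $B^Y_{j,\ell_k}$ for (2); and vertex-feasibility at the far end gives (3), with your explicit treatment of $p_{X,t}$ filling a detail the paper leaves implicit. The induction framing and the spurious-interaction audit are not needed here, because (2) and (3) hold in every vertex-feasible configuration satisfying (1) and extra conflicts can only shrink the reachable set; that audit belongs instead to the completeness direction, where token moves must actually be simulated.
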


\begin{proof}
\textbf{(1)} Since $W_{\ell_k}$ is an isolated wire component in the embedded graph and agents cannot jump across non-incident wires, the total number of $b$-agents on $W_{\ell_k}$ remains conserved. Because there is exactly one fewer agent than vertices, $W_{\ell_k}$ always contains exactly one hole.

\textbf{(2)} If a $v$-agent occupies the port vertex $p_{X,i}$, its physical disk overlaps $B_{i,\ell_k}^X$. This overlap dictates that no $b$-agent can reside at $B_{i,\ell_k}^X$, locking $B_{i,\ell_k}^X$ as the hole. By the result of (1), the wire $W_{\ell_k}$ contains exactly one hole. Since this unique hole is located at $B_{i,\ell_k}^X$, all other vertices on $W_{\ell_k}$ must be occupied by $b$-agents. Consequently, the opposite endpoint vertex $B_{j,\ell_k}^Y$ is occupied by a $b$-agent.

\textbf{(3)} Suppose a $v$-agent occupies the port vertex $p_{X,i}$. By the result of (2), the blocking vertex $B_{j,\ell_k}^Y$ must be occupied by a $b$-agent. If another $v$-agent attempts to move into the port vertex $p_{Y,j}$, its physical disk would geometrically conflict with the $b$-agent already stationed at $B_{j,\ell_k}^Y$. Thus, simultaneous occupation of the connected ports is impossible.
\end{proof}

\begin{figure}[h]
\centering
\begin{tikzpicture}[scale=1.1, every node/.style={font=\small, align=center}]

  \coordinate (PortA) at (0,1.5);
  \coordinate (BA) at (1.5,1.5);

  \coordinate (C1) at (4,1.5);   
  \coordinate (C2) at (4,-0.5);  
  \coordinate (BB) at (6.5,-0.5);
  
  \fill[red!15] (BB) circle (0.6); 
  \draw[red, thick] (BB) circle (0.6);
  \coordinate (PortB) at (8.0,-0.5);

  \draw[thick, gray] (BA) -- (C1) -- (C2) -- (BB);

  \draw[gray] (3.6,1.5) -- (3.6,1.1) -- (4,1.1);
  \draw[gray] (4.4,-0.5) -- (4.4,-0.1) -- (4,-0.1);

  \draw[dashed, red, ultra thick] (PortA) -- (BA) node[midway, above, yshift=2pt] {$d^2 < 4r^2$};
  \draw[dashed, red, ultra thick] (BB) -- (PortB) node[midway, above, yshift=2pt] {};
  
  \fill[blue!70!white] (PortA) circle (3pt) node[left, xshift=-4pt, text=black] {Port Vertex $p_{X,i}$\\($v$-agent)};
  \draw[thick, fill=white] (BA) circle (3pt) node[below, yshift=-4pt] {$B_{i,{\ell_k}}^{X}$\\ (Hole)};
  
  \fill[red!70!white] (C1) circle (3pt) node[below left, xshift=-10pt, yshift=-2pt, text=black] {${M}_{i,{\ell_k}}^{X}$\\($b$-agent)};
  \fill[red!70!white] (C2) circle (3pt) node[above left, xshift=-10pt, yshift=-6pt, text=black] {${M}_{j,{\ell_k}}^{Y}$\\($b$-agent)};
  \fill[red!70!white] (BB) circle (3pt) node[above, yshift=20pt, text=black] {$B_{j,{\ell_k}}^{Y}$\\($b$-agent)};

  \draw[thick, fill=white] (PortB) circle (3pt) node[right, xshift=4pt] {Port Vertex $p_{Y,j}$\\ (Hole)};

\end{tikzpicture}
\caption{Constraint transmission along an orthogonally routed link edge wire. The physical occupation of the port vertex $i$ consumes the hole at $B_{i,{\ell_k}}^{X}$, rigidly forcing $b$-agents to transition along the orthogonal corners. This guarantees $B_{j,{\ell_k}}^{Y}$ is occupied, which preserves the port vertex $p_{Y,j}$ as a hole.}
\label{fig:link-wire}
\end{figure}

\subsection{Start and goal configurations}

To complete the construction of the instance $I_{\text{LA}}$ of LA-MAPF, we must define the start configuration $C_s$ and the goal configuration $C_g$ based on the standard token configurations $T_s$ and $T_g$.

\paragraph{Token Triangles ($v$-agents):} For every token triangle $X$ of $G_{\st}$, the unique $v$-agent assigned to $X$ is placed at the port vertex $p_{X,i}$ corresponding to the port $i$ occupied by the token in $T_s$ (specifically, if $i=2$, we place the agent at $p_{X,2}$, leaving the vertex $p_{X,t}$ unoccupied). Its goal position is the port vertex $p_{X,j}$ determined in the same manner based on the port $j$ occupied in $T_g$.

\paragraph{Token Edges ($v$-agents):} 
For every token edge $Y$ with the ports $1$ and $2$, the sequence of $v$-agents leaves exactly one unoccupied vertex as a single hole. In the start configuration $C_s$, if the token is at the port $1$ in $T_s$, the hole is placed at the port vertex $p_{Y,2}$ corresponding to the opposite port $2$, which forces the $v$-agents to occupy the intermediate orthogonal corners and the port vertex $p_{Y,1}$. Once this initial setup is established, because the physical narrowness of the wire prevents agents from bypassing one another, their relative spatial order becomes permanently locked. The goal configuration $C_g$ assigns the hole based on $T_g$ in the same manner. Since the relative order of the agents remains exactly as it was set in $C_s$, simply transitioning the hole to its goal position naturally and fully determines the final goal positions of all $v$-agents along the wire.

\paragraph{Link Edges ($b$-agents):} 
For the orthogonal wire $W_{\ell_k}$ corresponding to each link edge ${\ell_k}$ joining the port $i$ of one component $X$ to the port $j$ of another component $Y$, the assigned $b$-agents between its blocking vertices $B_{i,{\ell_k}}^{X}$ and $B_{j,{\ell_k}}^{Y}$ are similarly configured to leave exactly one unoccupied vertex as a single hole. In the start configuration $C_s$, if the start token configuration $T_s$ has a token at the port $i$ of $X$, the associated port vertex must be occupied by a $v$-agent. Due to the physical disk overlap, this $v$-agent blocks $B_{i,{\ell_k}}^{X}$ from hosting any $b$-agent, forcing the hole to start at $B_{i,{\ell_k}}^{X}$. If neither port connected by $\ell_k$ is occupied in $T_s$, the hole can be placed arbitrarily at either $B_{i,\ell_k}^X$ or $B_{j,\ell_k}^Y$. The remaining $b$-agents simply fill the rest of the wire. Just as with the token edges, once these agents are placed in $C_s$, their inability to bypass one another guarantees that their initial relative sequence is preserved until the end. The goal configuration $C_g$ are constructed identically based on $T_g$, and this preserved order ensures that dictating the hole's final location uniquely specifies the exact goal positions of all $b$-agents.

This completes the construction of the instance $I_{\text{LA}}$ of LA-MAPF (Figure \ref{fig:reduction-pipeline}).

\begin{figure}[htbp]
\centering
\resizebox{\textwidth}{!}{
\begin{tikzpicture}[
    scale=2.2, 
    every node/.style={font=\small},
    token/.style={circle, draw=black, fill=black, inner sep=1.5pt},
    empty/.style={circle, draw=black, fill=white, thick, inner sep=1.5pt},
    vagentbg/.style={circle, draw=blue!80, fill=blue!15, thick, minimum size=18pt, inner sep=0pt}, 
    vagent/.style={circle, draw=blue, fill=blue, inner sep=1.5pt}, 
    bagentbg/.style={circle, draw=red!80, fill=red!15, thick, minimum size=18pt, inner sep=0pt},
    bagent/.style={circle, draw=red!80!black, fill=red!80, inner sep=1.5pt}, 
    hole/.style={circle, draw=black, fill=white, thick, inner sep=1.5pt} 
]

\begin{scope}[shift={(0, 3)}]

    \node[font=\bfseries, right] at (-0.7, -0.6) {(a) {\sc Restricted {\sc Sliding Tokens}} $G_{\rm ST}$};

    \coordinate (X1) at (0.2, 3);
    \coordinate (X2) at (1.8, 3);
    \coordinate (X3) at (1.0, 1.8);

    \draw[thick] (X1) -- (X2) -- (X3) -- cycle;

    \node[token, label=left:$1$] (nX1) at (X1) {};
    \node[empty, label=above:$2$] (nX2) at (X2) {};
    \node[empty, label=right:$3$] (nX3) at (X3) {};

    \node[above=0.15cm of nX1, text=blue, font=\scriptsize\bfseries] {$X$ : Token Triangle};

    \coordinate (Y1) at (-0.2, 0.5);
    \coordinate (Y2) at (0.8, 0.5);

    \draw[thick, double] (Y1) -- (Y2);

    \node[empty, label=below:$1$] (nY1) at (Y1) {};
    \node[token, label=below:$2$] (nY2) at (Y2) {};
    \node[below left=0.4cm and 0.05cm of nY2, text=blue, font=\scriptsize\bfseries] {$Y$ : Token Edge};

    \coordinate (Z1) at (1.2, 0.5);
    \coordinate (Z2) at (2.2, 0.5);

    \draw[thick, double] (Z1) -- (Z2);

    \node[empty, label=below:$1$] (nZ1) at (Z1) {};
    \node[token, label=below:$2$] (nZ2) at (Z2) {};
    \node[below left=0.4cm and 0.05cm of nZ2, text=blue, font=\scriptsize\bfseries] {$Z$ : Token Edge};

    \draw[dashed, thick, gray] (nX1) edge[bend right=20] node[right, font=\scriptsize] {Link Edge $\ell_1$} (nY1);
    \draw[dashed, thick, gray] (nX2) edge[bend left=20] node[left, font=\scriptsize] {Link Edge $\ell_2$} (nZ2);
    \draw[dashed, thick, gray] (nX3) -- node[left, font=\scriptsize] {Link Edge $\ell_3$} (nZ1);
    \draw[dashed, thick, gray] (nY2) -- node[above left, font=\scriptsize] {Link Edge $\ell_4$} (nZ1);

    \draw[gray!30, dashed, rounded corners] (-0.8, -1.0) rectangle (2.6, 3.8);

\end{scope}

\begin{scope}[shift={(3.5, 3)}]

    \node[font=\bfseries, right] at (-0.5, -1.4) {(b) Macro-graph $H$ \& Grid Routing};
    
    \draw[step=1, gray!20, thin] (-0.5, -0.5) grid (6.5, 4.5);
    \foreach \x in {0,1,2,3,4,5,6} \node[font=\tiny, gray, below] at (\x, -0.5) {\x};
    \foreach \y in {0,1,2,3,4} \node[font=\tiny, gray, left] at (-0.5, \y) {\y};

    \node[empty, label=below right:$p_{Y,1}$] (gY1) at (0, 1) {};
    \node[empty, label=below right:$p_{Y,2}$] (gY2) at (2, 1) {};
    \node[empty, label=below right:$p_{Z,1}$] (gZ1) at (4, 1) {};
    \node[empty, label=below right:$p_{Z,2}$] (gZ2) at (6, 1) {};
    
    \node[draw, thick, fill=gray!10, circle, inner sep=2pt, label=above right:{Macro-vertex (2,3)}] (gX) at (2, 3) {};

    \draw[ultra thick, black] (gY1) -- (0, 0) -- (2, 0) -- (gY2);
    \draw[ultra thick, black] (gZ1) -- (4, 0) -- (6, 0) -- (gZ2);

    \draw[dashed, thick, gray] (gX) -- (0, 3) -- (gY1); 
    \draw[dashed, thick, gray] (gX) -- (2, 4) -- (6, 4) -- (gZ2); 
    \draw[dashed, thick, gray] (gX) -- (4, 3) -- (gZ1); 
    \draw[dashed, thick, gray] (gY2) -- (gZ1);

    \draw[gray!30, dashed, rounded corners] (-0.8, -1.9) rectangle (6.8, 4.8);
\end{scope}

\begin{scope}[shift={(2, -4)}]
    
    \node[font=\bfseries, right] at (-0.5, -1.4) {(c)  Instance $I_{\mathrm{LA}}$ of LA-MAPF};
    
    \draw[step=1, gray!20, thin] (-0.5, -0.5) grid (6.5, 4.5);
    
    \coordinate (GridPoint) at (2,3);
    \node[font=\scriptsize, gray, above left] at (2.0, 3.0) {};
    
    \coordinate (P1) at (1.925, 3.0);
    \coordinate (P2) at (1.925, 3.15);
    \coordinate (Pt) at (2.075, 3.15);
    \coordinate (P3) at (2.075, 3.0);
    
    \coordinate (B1) at (1.6625, 3.0);
    \coordinate (B2) at (2.0, 3.375);
    \coordinate (B3) at (2.3375, 3.0);
    
    \coordinate (Y1) at (0,1); \coordinate (B_Y1_N) at (0, 1.225);
    \coordinate (Y2) at (2,1); \coordinate (B_Y2_E) at (2.225, 1);
    \coordinate (Z1) at (4,1); \coordinate (B_Z1_N) at (4, 1.225); \coordinate (B_Z1_W) at (3.775, 1);
    \coordinate (Z2) at (6,1); \coordinate (B_Z2_N) at (6, 1.225);

    \node[vagentbg] at (P1) {};
    
    \node[vagentbg] at (0,0) {}; 
    \node[vagentbg] at (2,0) {};
    \node[vagentbg] at (Y2) {}; 
    \node[vagentbg] at (4,0) {}; 
    \node[vagentbg] at (6,0) {};
    \node[vagentbg] at (Z2) {};
    
    \node[bagentbg] at (0,3) {};
    \node[bagentbg] at (B_Y1_N) {};
    \node[bagentbg] at (B2) {};
    \node[bagentbg] at (2,4) {};
    \node[bagentbg] at (6,4) {};
    \node[bagentbg] at (4,3) {};
    \node[bagentbg] at (B_Z1_N) {};
    \node[bagentbg] at (B_Z1_W) {};

    \draw[ultra thick, black] (Y1) -- (0,0) -- (2,0) -- (Y2);
    \draw[ultra thick, black] (Z1) -- (4,0) -- (6,0) -- (Z2);
    
    \draw[thick, black] (B1) -- (0,3) -- (B_Y1_N);
    \draw[thick, black] (B2) -- (2,4) -- (6,4) -- (B_Z2_N);
    \draw[thick, black] (B3) -- (4,3) -- (B_Z1_N);
    \draw[thick, black] (B_Y2_E) -- (B_Z1_W);

    \draw[thick] (P1) -- (P2) -- (Pt) -- (P3) -- cycle;

    \node[vagent] at (P1) {};
    \node[empty] at (P2) {};
    \node[empty] at (Pt) {};
    \node[empty] at (P3) {};
    
    \node[hole] at (B1) {};
    \node[bagent] at (B2) {};
    \node[hole] at (B3) {};

    \node[font=\scriptsize, text=black] at (1.9, 2.77) {$p_{X,1}$};
    \node[font=\scriptsize, text=black] at (1.75, 3.2) {$p_{X,2}$};
    \node[font=\scriptsize, text=black] at (2.25, 3.2) {$p_{X,t}$};
    \node[font=\scriptsize, text=black] at (2.18, 2.85) {$p_{X,3}$};
    
    \node[font=\scriptsize, xshift=-2mm, yshift=0mm, text=black] at (1.66, 2.80) {$B_{1,\ell_1}^{X}$};
    \node[font=\scriptsize, xshift=-3mm, yshift=0mm,text=red!80!black] at (1.80, 3.48) {$B_{2,\ell_2}^{X}$};
    \node[font=\scriptsize, xshift=2mm, yshift=0mm, text=black] at (2.45, 3.15) {$B_{3,\ell_3}^{X}$};

    \node[hole, label={[font=\scriptsize]left:$p_{Y,1}$}] at (Y1) {}; 
    \node[vagent, label={[font=\scriptsize, xshift=0mm, yshift=-2mm]below left:$\overline{p}_{Y,1}$}] at (0,0) {}; 
    \node[vagent, label={[font=\scriptsize, xshift=0mm, yshift=-2mm]below right:$\overline{p}_{Y,2}$}] at (2,0) {};
    \node[vagent, label={[font=\scriptsize, xshift=0mm, yshift=1mm]above left:$p_{Y,2}$}] at (Y2) {}; 
    \node[hole] at (B_Y2_E) {};

    \node[hole, label={[font=\scriptsize]below right:$p_{Z,1}$}] at (Z1) {}; 
    \node[vagent, label={[font=\scriptsize, xshift=0mm, yshift=-2mm]below:$\overline{p}_{Z,1}$}] at (4,0) {}; 
    \node[vagent, label={[font=\scriptsize, xshift=0mm, yshift=-2mm]below:$\overline{p}_{Z,2}$}] at (6,0) {};
    \node[vagent, label={[font=\scriptsize, xshift=2mm, yshift=0mm]below right:$p_{Z,2}$}] at (Z2) {};
    \node[hole] at (B_Z2_N) {};

    \node[bagent, label={[font=\scriptsize, xshift=-1mm, yshift=-2mm,text=red!80!black]left:${M}_{1,\ell_1}^{X}$}] at (0,3) {};
    \node[bagent, label={[font=\scriptsize, xshift=-1mm, yshift=2mm,text=red!80!black]left:$B_{1,\ell_1}^{Y}$}] at (B_Y1_N) {};

    \node[bagent, label={[font=\scriptsize, , xshift=-1mm, yshift=2mm, text=red!80!black]above left:${M}_{2,\ell_2}^{X}$}] at (2,4) {};
    \node[bagent, label={[font=\scriptsize, , xshift=-1mm, yshift=2mm, text=red!80!black]above right:${M}_{2,\ell_2}^{Z}$}] at (6,4) {};
    \node[hole, label={[font=\scriptsize]above right:$B_{2,\ell_2}^{Z}$}] at (B_Z2_N) {};

    \node[bagent, label={[font=\scriptsize, text=red!80!black]above right:${M}_{3,\ell_3}^{X}$}] at (4,3) {};
    \node[bagent, label={[font=\scriptsize, xshift=-1mm, yshift=2mm,text=red!80!black]left:$B_{1,\ell_3}^{Z}$}] at (B_Z1_N) {};

    \node[hole, label={[font=\scriptsize]below right:$B_{2,\ell_4}^{Y}$}] at (B_Y2_E) {}; 
    
    \node[bagent, label={[font=\scriptsize, xshift=2mm, yshift=-7mm,text=red!80!black]left:$B_{1,\ell_4}^{Z}$}] at (B_Z1_W) {};

    \draw[gray!30, dashed, rounded corners] (-0.8, -1.9) rectangle (6.8, 4.9);
\end{scope}
\end{tikzpicture}
}
\caption{A comprehensive macro-to-micro geometric reduction pipeline. \textbf{(a)} A standard configuration in the instance ($G_{\rm ST}, T_s, T_g$) of {\sc Restricted {\sc Sliding Tokens}}, where black circles represent tokens and white circles are empty. \textbf{(b)} The macro-graph $H$ embedded on an orthogonal grid, where the token triangle $X$ acts as a macro-vertex assigned to a single grid intersection (e.g., the coordinate $(2,3)$). \textbf{(c)} The resulting scaled LA-MAPF Euclidean workspace. Both $v$-agents (blue) and $b$-agents (red) are represented with their physical collision disks of radius $r$. Because the physical disks of the agents cannot bypass each other in the orthogonal wires, their relative order is fixed.}
\label{fig:reduction-pipeline}
\end{figure}

\subsection{Polynomial size}

\begin{lemma}
The construction of $I_{\mathrm{LA}}$ can be carried out in polynomial time and the size of $I_{\mathrm{LA}}$ is bounded by a polynomial in the size of $G_{\st}$.
\end{lemma}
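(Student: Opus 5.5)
The plan is a direct accounting argument. Let $N = |V(G_{\st})|$ and $M = |E(G_{\st})|$. Since $G_{\st}$ has maximum degree at most three, $M \le 3N/2$, so it suffices to bound everything by a polynomial in $N$. The proof has three steps: bound the size of the macro-graph $H$ and its embedding, bound the number of vertices, edges and agents produced by the local gadgets, and bound the bit-length of all rational coordinates.

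\textbf{Step 1: the macro-graph and its embedding.} The graph $H$ has at most $N$ vertices, since each token triangle contributes one macro-vertex and each token edge contributes its two ports. It has at most $M$ edges, namely the token edges and the link edges. Identifying the token triangles, contracting them and listing the link edges takes linear time. By the result of \cite{BiedlK98}, a planar orthogonal drawing of $H$ on a $|V_H|\times|V_H|$ grid with at most two bends per edge is computed in linear time. Hence every grid coordinate is an integer in $\{0,\dots,N\}$.

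\textbf{Step 2: counting gadget vertices, edges and agents.} I will bound each gadget separately.
\begin{itemize}
\item Each token triangle gadget contributes $4$ port vertices, $4$ edges and one $v$-agent, plus at most $3$ blocking vertices.
\item Each token edge wire contributes at most $4$ vertices, at most $3$ edges and at most $3$ $v$-agents.
\item Each link edge wire $W_{\ell_k}$ contributes at most $4$ vertices, namely two blocking vertices and at most two corners, plus at most $3$ edges and at most $3$ $b$-agents.
\end{itemize}
Summing over at most $N$ components and at most $M$ link edges gives $|V|,|E|,|A| = O(N+M) = O(N)$. The start and goal configurations $C_s$ and $C_g$ are then read off from $T_s$ and $T_g$ in linear time. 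This uses the rule that the hole of each wire is determined by the occupied port, with an arbitrary choice allowed when neither port is occupied.

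\textbf{Step 3: bit-length of the coordinates (the main obstacle).} The delicate point is to fix $r$ as a concrete rational, say $r=1$, so that $\Lambda = 100$. Every gadget vertex is then a scaled grid point $\Lambda\cdot(x,y)$ plus an offset that is a fixed multiple of $r$ from a finite list: $\pm 0.5r$, $r$, $\pm 2.25r$, $2.5r$ and $1.5r$. Rotations by multiples of $90^\circ$ only permute and negate these offsets. The crop positions of the link wires are likewise constant offsets. Consequently every coordinate is a rational with constant denominator (dividing $4$) and absolute value $O(N)$, so it has $O(\log N)$ bits.

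The total encoding length is therefore $O(N\log N)$. Each step is a constant amount of arithmetic per vertex or edge on top of the linear-time embedding, so the whole construction runs in polynomial time.
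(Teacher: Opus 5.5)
Your proposal is correct and follows essentially the same route as the paper: bound $|V_H|\le |V(G_{\st})|$, invoke the linear-time planar orthogonal drawing of \cite{BiedlK98} on a $|V_H|\times|V_H|$ grid, observe that every gadget and wire has constant size, and note that all coordinates are scaled grid points plus constant rational multiples of $r$, so each has logarithmic bit-length. Your explicit choice $r=1$ (so denominators divide $4$) and the $O(N)$ count of vertices, edges and agents only make the paper's sketch more concrete.
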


\begin{proof}
The macro-graph $H=(V_H, E_H)$ contains one macro-vertex per token triangle and two port vertices per token edge. Thus, $|V_H| \le |V(G_{\st})|$. Recall that the orthogonal embedding can be computed in linear time and can be drawn on a $|V_H| \times |V_H|$ grid~\cite{BiedlK98}. Thus, the coordinates of all grid points and orthogonal bends can be represented using at most $O(\log |V_H|)$ bits. Replacing macro-vertices with constant-size triangle gadgets, and routing wires by placing vertices at orthogonal bends, generates a total number of vertices, edges, and agents that is polynomial with respect to $|V(G_{\st})|$.

The exact geometric coordinates are computed by mapping integer grid points to $\mathbb{Q}^2$, scaled by a constant factor $\Lambda = 100r$, and applying specific rational offsets (e.g., $1.75r$ and $1.5r$) for local blocking vertices. Therefore, all coordinates possess a polynomial binary encoding length.
\end{proof}

\subsection{Correctness}

We now prove the correctness of the reduction. 

\begin{lemma}
If the instance $I_{\la}$ of  \emph{LA-MAPF} has a conflict-free plan from $C_s$ to $C_g$, then in the original instance $(G_{\rm ST}, T_s, T_g)$ of {\sc Restricted {\sc Sliding Tokens}}, there exists a sequence of valid token moves that transforms $T_s$ into $T_g$.
\end{lemma}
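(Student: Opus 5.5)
The approach is to define an abstraction map $\phi$ from LA-MAPF configurations reachable from $C_s$ to token configurations of $G_{\st}$. We then show that every transition $C \rightsquigarrow C'$ in the plan maps either to the same token configuration or to one obtained from it by a single valid token move.

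\textbf{Definition of $\phi$.} For a token triangle $X$, the token sits at port $1$ if the $v$-agent is at $p_{X,1}$, at port $3$ if it is at $p_{X,3}$, and at port $2$ if it is at $p_{X,2}$ or $p_{X,t}$. For a token edge $Y$, whose wire carries exactly one hole (agents are conserved on the wire, as in part (1) of the Constraint Transmission Invariant), the token sits at port $2$ if the hole lies at $p_{Y,1}$ or at an intermediate corner, and at port $1$ if the hole lies at $p_{Y,2}$. The rule is that the token is at port $1$ only once the hole has fully reached the opposite end. More precisely, I would fix $\phi$ to record the port that is physically occupied by a $v$-agent, with ties broken toward the last occupied endpoint; this tie-breaking needs care.

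\textbf{The chosen convention.} The cleaner choice is: the token is at port $i$ whenever $p_{Y,i}$ is occupied. When both endpoints are occupied (hole at a corner), the token stays where it was before the hole left that endpoint. Then $\phi(C_s)=T_s$ and $\phi(C_g)=T_g$ by construction of the start and goal configurations.

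\textbf{Step 1: $\phi(C)$ is a standard independent set.} Standardness is immediate: each component contributes exactly one port. Independence across a link edge $\ell_k$ joining port $i$ of $X$ and port $j$ of $Y$ follows from part (3) of the Constraint Transmission Invariant: $p_{X,i}$ and $p_{Y,j}$ are never simultaneously occupied. For the token-edge tie case, I would add the argument that when both endpoints of a token edge are occupied, both are physically blocking. So every constraint an occupied port imposes is actually satisfied, which makes the tie-breaking harmless.

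\textbf{Step 2: each transition changes $\phi$ by at most one valid token move.} Split by the type of agent moving.
\begin{itemize}
\item A $b$-agent move does not change $\phi$.
\item A $v$-agent move inside a triangle gadget along the rectangle $p_{X,1}p_{X,2}p_{X,t}p_{X,3}$ changes the port by at most one step. Every pair of ports in $K_3$ is adjacent, so this is a slide along a triangle edge, possibly an identity step via $p_{X,t}$.
\item A $v$-agent move on a token-edge wire changes $\phi$ only when the hole finally arrives at an endpoint, which is a slide along the token edge.
\end{itemize}
In each case the new token configuration is independent by Step 1 applied to $C'$. The changes are therefore valid token moves, and concatenating them gives a sequence from $T_s$ to $T_g$.

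\textbf{Main obstacle.} The hardest part is justifying that agents never leave their wire or gadget and never pass one another. The plan relies on the fact that the graph of $I_{\la}$ is a disjoint union of the gadget rectangles and wires, which holds because blocking vertices have no edges to port vertices. Order preservation on a path component, and hence the hole semantics, is then purely combinatorial, since agents on a path cannot swap. I would state this explicitly as a preliminary claim before defining $\phi$.
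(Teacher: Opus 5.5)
Your proposal is correct and follows the paper's route: the same abstraction map on triangle gadgets, independence of each mapped set via part (3) of the Constraint Transmission Invariant, at most one token change per transition, and deletion of repeated token configurations. The only difference is the tie-break when a token-edge hole sits at a corner: you keep the previous port, which makes $\phi$ depend on the plan prefix rather than on the configuration alone, whereas the paper always assigns port $1$; either choice works because both endpoint vertices are then occupied and therefore both are blocking.
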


\begin{proof}
Let $C_0 = C_s, C_1, C_2, \dots, C_k = C_g$ be a conflict-free plan in $I_{\text{LA}}$. We construct a sequence of valid token moves by defining a mapping from each configuration $C_i$ to a token configuration $T_i$ of $G_{\rm ST}$. For each $C_i$, the token configuration $T_i$ is determined as follows:

\begin{itemize}
    \item \textbf{For each token triangle $X$:} 
    If the $v$-agent is at the port vertex $p_{X,1}$, place the token at the port 1 in $T_i$. 
    If the $v$-agent is at $p_{X,3}$, place the token at the port 3 in $T_i$. 
    If the $v$-agent is at $p_{X,2}$ or $p_{X,t}$, place the token at the port 2 in $T_i$.
    
    \item \textbf{For each token edge $Y$:}
    If the unique hole in the token edge wire is at $p_{Y,1}$, place the token at the port 2 in $T_i$. 
    If the hole is at $p_{Y,2}$, place the token at the port 1 in $T_i$. 
    If both $p_{Y,1}$ and $p_{Y,2}$ are occupied by $v$-agents, place the token at the port 1 in $T_i$.
\end{itemize}

First, we prove that for any $i$, $T_i$ is a token configuration in $G_{\rm ST}$. 
According to our mapping rules, a token is assigned to a port only if its corresponding port vertex is occupied by a $v$-agent in $C_i$. By Lemma 4, for any link edge $\ell_m$ connecting the port $n$ of a component $X$ and the port $w$ of a component $Y$, the two corresponding port vertices $p_{X,n}$ and $p_{Y,w}$ cannot be occupied by $v$-agents in $C_i$. Therefore, in the mapped configuration $T_i$, no link edge can have tokens at both of its endpoints. Thus, $T_i$ satisfies the independent set constraint.

Second, we prove that the mapped token configurations $T_i$ and $T_{i+1}$ differ by at most one token position. 
The transition $C_i \rightsquigarrow C_{i+1}$ involves one agent moving to an adjacent vertex. 
If a $b$-agent moves, the positions of all $v$-agents and holes within the token triangle gadgets and token edge wires remain unchanged, yielding $T_{i+1} = T_i$. 
If a $v$-agent within the gadget corresponding to a token triangle $X$ moves, only the token corresponding to $X$ will alter its mapped port. 
If a $v$-agent within the orthogonal wire representing a token edge $Y$ moves, the unique hole in this wire cascades by one step. 
According to our mapping for token edges, the token's position in $T$ changes between the port 1 and the port 2 if and only if the position of the hole moves to or from the port vertex $p_{Y,1}$. In all cases, the transition $C_i \rightsquigarrow C_{i+1}$ causes at most one token to change its position in the mapped token configurations.

Finally, by removing $T_{i+1}$ from the sequence $T_0, T_1, \dots, T_k$ whenever $T_i = T_{i+1}$, we obtain a sequence $T'_0, T'_1, \dots, T'_q$ satisfying $T'_j \neq T'_{j+1}$ for all $j$. Because every step between consecutive token configurations $T'_j$ and $T'_{j+1}$ involves one token changing its position, it constitutes a valid token move in $G_{\rm ST}$. Since the start configuration $C_s = C_0$ maps to $T'_0 = T_s$ and the goal configuration $C_g = C_k$ maps to $T'_q = T_g$, the sequence $T'_0, \dots, T'_q$ demonstrates a sequence of valid token moves that transforms $T_s$ into $T_g$.
\end{proof}

\begin{lemma}
If the instance $(G_{\rm ST},T_s,T_g)$ of {\sc Restricted {\sc Sliding Tokens}} has a sequence of token moves that transforms $T_s$ to $T_g$, then the instance $I_{\la}$ of  \emph{LA-MAPF} has a conflict-free plan from the start configuration $C_s$ to the goal configuration $C_g$.
\end{lemma}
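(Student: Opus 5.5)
We argue by induction along the token sequence $T_s = T_0, T_1, \dots, T_q = T_g$. The invariant is a notion of a \emph{canonical} LA-MAPF configuration $C(T)$ for each standard token configuration $T$. In $C(T)$:
\begin{itemize}
\item for each token triangle $X$, the $v$-agent sits at $p_{X,i}$ where $i$ is the token's port (using $p_{X,2}$ for port $2$);
\item for each token edge $Y$, the hole sits at the port vertex opposite the token;
\item for each link edge $\ell_k$ between port $i$ of $X$ and port $j$ of $Y$, the hole of $W_{\ell_k}$ sits at $B^X_{i,\ell_k}$ if port $i$ is occupied in $T$, at $B^Y_{j,\ell_k}$ if port $j$ is occupied, and (say) at the endpoint that $C_s$ or $C_g$ prescribes otherwise.
\end{itemize}
We first verify that $C(T)$ is vertex-feasible. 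This uses the distance computations of the gadgets: an occupied port vertex conflicts only with its own blocking vertices, which are holes, and the scaling by $\Lambda=100r$ separates everything else. Both $C_s$ and $C_g$ are canonical up to the free choice of hole for links with no occupied endpoint, so it suffices to give plans realizing each single token move, plus a cleanup phase that relocates "free" link holes.

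\textbf{Simulating one token move.} Let a token slide from port $a$ to port $b$ within one component $Z$. Validity in $G_{\rm ST}$ means $b$ has no occupied link-neighbor in $T_{t+1}$; and $a$ is unoccupied after the move. The move is simulated in three phases.
\begin{itemize}
\item \emph{Freeing the target.} For every link edge $\ell$ incident to port $b$, whose other endpoint is unoccupied, shift the hole of $W_\ell$ to the blocking vertex at $b$'s side. Each shift is at most three single-agent slides along the orthogonal path. This is legal because the far endpoint's blocking vertex is not overlapped by any $v$-agent. We also check edge conflicts: the swept segments lie on the wire, far from all other gadgets by the $\Lambda$ scaling, and near the endpoint only the incident port vertex is within $2r$.
\item \emph{Moving the token.} Move the $v$-agent(s) of $Z$.
\begin{itemize}
\item In a triangle, the $v$-agent travels along the rectangle $p_{X,1}$--$p_{X,2}$--$p_{X,t}$--$p_{X,3}$. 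Moves between ports $1$ and $3$ pass through $p_{X,2}, p_{X,t}$, so all three blocking vertices $B^X_{1},B^X_{2},B^X_{3}$ must be holes during the passage. This requires freeing the blocking vertex of the intermediate port $2$ as well, which is possible because port $2$ carries no token in a standard configuration and, being in a triangle, is adjacent to the token's current port. We must also check the swept-segment distances from the rectangle edges to each $B^X$; for example, the segment $\overline{p_{X,1}p_{X,2}}$ versus $B^X_{2}$ must be checked explicitly.
\item In a token edge, cascade the hole step by step as described in the construction, again checking that each swept segment stays at distance $\ge 2r$ from occupied blocking vertices of the opposite port.
\end{itemize}
\item \emph{Restoring canonical form.} For links incident to $a$, the blocking vertices at $a$ may now be reoccupied. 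We leave them as holes, since canonicity only fixes holes forced by occupied ports.
\end{itemize}

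\textbf{Main obstacle.} The expected difficulty is the edge-conflict verification for intermediate positions. The concern is a swept segment inside a gadget, such as the triangle transit through $p_{X,2},p_{X,t}$ or a wire corner slide, passing within $2r$ of a blocking vertex that is not its designated partner. We would handle this by listing, for each gadget edge, the finitely many blocking vertices within $\Lambda/2$, and computing squared segment distances in rational arithmetic. Where the triangle transit is problematic, we would first free all three blocking vertices of $X$; this is always possible because only the current port's neighbors are occupied and these do not block the other two ports' links. The final step appends hole relocations on links with no occupied endpoint to match $C_g$, and concatenates the phases into a conflict-free plan from $C_s$ to $C_g$.
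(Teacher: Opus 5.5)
Your overall strategy matches the paper's. For each token move you slide the holes of the link wires at the target port onto the target's blocking vertices (legal because the far port is empty in $T_{t+1}$), then move the $v$-agent(s), and finally relocate the holes of unconstrained link wires to match $C_g$. Your canonical-form bookkeeping and edge-conflict checks are more careful than the paper's sketch.

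\textbf{The gap.} One step fails as written. You read the triangle gadget as the path $p_{X,1}$--$p_{X,2}$--$p_{X,t}$--$p_{X,3}$ and route a move between ports $1$ and $3$ through $p_{X,2},p_{X,t}$. You then assert that $B^X_{2,\ell_2}$ can always be vacated. That assertion is false. The independent-set conditions of $T_t$ and $T_{t+1}$ only force the link-neighbors of ports $1$ and $3$ to be empty; the link-neighbor $p_{W,j}$ of port $2$ may carry a token. In that case part (2) of the Constraint Transmission Invariant locks the hole of $W_{\ell_2}$ at $B^W_{j,\ell_2}$, so $B^X_{2,\ell_2}$ holds a $b$-agent you cannot move. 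Since $\|B^X_{2,\ell_2}-p_{X,2}\|^2=2.5r^2<4r^2$, the $v$-agent can neither enter $p_{X,2}$ nor sweep $\overline{p_{X,1}p_{X,2}}$. Your simulation therefore gets stuck on a perfectly valid token move.

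\textbf{The repair.} The rectangle graph is a $4$-cycle, so $\{p_{X,1},p_{X,3}\}$ is itself an edge of the gadget. Its swept segment is at distance $2.5r$ from $B^X_{2,\ell_2}$ and at distance $1.75r$ from $B^X_{1,\ell_1}$ and $B^X_{3,\ell_3}$. The blocker at the current port is already a hole, by vertex conflict with the $v$-agent. The blocker at the target port is exactly what your first phase frees. So the direct slide is conflict-free and never involves port $2$'s blocker.

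The same care is needed for the port-$2$ moves. A move $2\to 3$ must go $p_{X,2}\to p_{X,t}\to p_{X,3}$, not through $p_{X,1}$, whose blocker may be stuck in the same way. A move $3\to 2$ ends at $p_{X,t}$ and needs the extra slide $p_{X,t}\to p_{X,2}$ to reach your canonical form. Both routes are legal, for two reasons:
\begin{itemize}
\item $B^X_{2,\ell_2}$ is a hole whenever the agent sits on a port-$2$ vertex.
\item $\overline{p_{X,2}p_{X,t}}$ stays at squared distance $4.0625r^2$ from $B^X_{1,\ell_1}$ and $B^X_{3,\ell_3}$.
\end{itemize}
With this correction the rest of your argument goes through.
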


\begin{proof}
Let $T_s = T_0, T_1, \dots, T_q = T_g$ be a sequence of token configurations such that $T_t$ can be transformed to $T_{t+1}$ via a valid token move for all $t$. We simulate each valid token move that transforms $T_t$ to $T_{t+1}$ using a finite sequence of collision-free transitions.

Suppose the valid token move slides the token within a component $X$ from the port $i$ to an adjacent port $k$. To simulate this valid token move in $I_{\la}$, we must first vacate every blocking vertex associated with the target port vertex $p_{X,k}$. Consider a link edge $\ell_m$ incident to the port $k$, connecting to the port $j$ of a component $Y$. Because $T_{t+1}$ is a token configuration, the port $j$ of the component $Y$ is currently unoccupied by a token (meaning no $v$-agent is at the port vertex $p_{Y,j}$). Thus, $B_{j,\ell_m}^Y$ is structurally safe. We can move the $b$-agents along $W_{\ell_m}$ one by one, transferring the hole to $B_{k,\ell_m}^X$. 

Once all blocking vertices incident to $p_{X,k}$ are holes, the $v$-agent within $X$ can move to $p_{X,k}$. We can prove that the resulting configuration is mapped to $T_{t+1}$ in the same way as the proof of Lemma 6. By repeating this procedure, we route all $v$-agents to their respective goal positions defined in the goal configuration $C_g$. Finally, for any link edge wire whose hole does not yet reach its goal position in $C_g$, we simply move its $b$-agents to reposition the hole. This guarantees that $I_{\text{LA}}$ has a conflict-free plan from $C_s$ to $C_g$. 
\end{proof}

\begin{lemma}
$\mathrm{LA\text{-}MAPF}$ is $\mathrm{PSPACE\text{-}hard}$.   
\end{lemma}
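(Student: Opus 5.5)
The plan is to assemble the ingredients already established into a standard polynomial-time many-one reduction from {\sc Restricted Sliding Tokens}, which is PSPACE-complete \cite{BonsmaCereceda09,HearnDemaine05}. Given an arbitrary instance $(G_{\st},T_s,T_g)$, we output the LA-MAPF instance $I_{\la}$ built in the preceding subsections. This involves the macro-graph $H$, its orthogonal embedding by \cite{BiedlK98} scaled by $\Lambda=100r$, the token triangle gadgets, the token edge wires, the cropped link edge wires, and the configurations $C_s,C_g$. By the polynomial-size lemma, this map is computable in polynomial time and $I_{\la}$ has polynomial encoding length with rational coordinates. We should also check that $C_s$ and $C_g$ are vertex-feasible, since the problem definition assumes this. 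For a blocking vertex adjacent to an occupied port vertex, the construction puts the wire's hole exactly there. All other pairs of agents lie either on non-incident structures, separated by far more than $2r$ because of the scaling, or at the non-conflicting offsets verified in the token triangle subsection, for example squared distance $4.0625r^2$.

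Correctness then follows by combining the two directions. The lemma stating that a conflict-free plan for $I_{\la}$ yields a sequence of valid token moves gives soundness. The lemma simulating each token move by collision-free transitions gives completeness. Together they establish that $(G_{\st},T_s,T_g)$ is a yes-instance if and only if $I_{\la}$ is a yes-instance. Since PSPACE-hardness is closed under polynomial-time many-one reductions, LA-MAPF is PSPACE-hard. Combining this with the membership lemma of Section~3 yields the Main Theorem. The Corollary follows because every edge in the construction is axis-parallel and the embedding is planar.

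The main obstacle I expect is not this assembly but making sure the cited lemmas really hold as used. In particular, the Constraint Transmission Invariant relies on there being no unintended edge conflicts between a moving agent's swept segment and stationary agents on other gadgets. Near a token triangle, the three blocking vertices sit within about $2.5r$ of the anchor, so I would check the swept-segment distances for moves inside the rectangle $p_{X,1},p_{X,2},p_{X,t},p_{X,3}$ and for moves of $b$-agents leaving $B^X_{i,\ell}$. The worry is that an edge sweep could be blocked by, or could itself block, a neighbouring port in an unintended way.

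Relatedly, the conversion from the simulation lemma to an actual plan needs each intermediate configuration to remain vertex-feasible while holes are shuttled along wires. If necessary, I would first verify these local distance inequalities by direct squared-distance computation for the constantly many gadget geometries, up to $90^\circ$ rotations. Only then would I invoke the two correctness lemmas.
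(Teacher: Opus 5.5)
Your proposal is correct and matches the paper's proof. The paper likewise composes the polynomial-time construction with the two correctness lemmas (plan $\Rightarrow$ token sequence, and token sequence $\Rightarrow$ plan) and invokes PSPACE-completeness of {\sc Restricted Sliding Tokens}; your extra check that $C_s$ and $C_g$ are vertex-feasible is a detail the paper leaves implicit and does not change the argument.
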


\begin{proof}
Our reduction maps any instance of {\sc Restricted {\sc Sliding Tokens}} to an instance of LA-MAPF in polynomial time. By Lemmas 6 and 7, the two instances share the same answer. Since {\sc Restricted {\sc Sliding Tokens}} is PSPACE-complete, LA-MAPF is PSPACE-hard.
\end{proof}

Our reduction demonstrates that LA-MAPF remains PSPACE-hard even if the agents' movements are restricted to horizontal and vertical directions, and even if the underlying graph of the given instance of LA-MAPF  is a plane graph. Together with membership in PSPACE (Lemma 3), Lemma 8 proves Theorem 1 and Corollary 2.

\section{Conclusion}
We established that the computational complexity of Multi-Agent Path Finding for Large Agents (LA-MAPF) is PSPACE-complete. More concretely, its PSPACE-hardness is proven via a geometric reduction from {\sc Restricted {\sc Sliding Tokens}}.

\bibliographystyle{plain}
\bibliography{complexity_lamapf_bib}

\end{document}